%% file: main.tex
\documentclass[twocolumn]{bytedance_seed}

\usepackage{amsmath}
\usepackage{amssymb}
\usepackage{amsthm}
\usepackage{xspace}
\usepackage{mathtools}
\usepackage[ruled,vlined]{algorithm2e}

\crefname{section}{Section}{Sections}
\crefname{subsection}{Section}{Sections}
\crefname{figure}{Figure}{Figures}
\crefname{table}{Table}{Tables}
\crefname{algorithm}{Algorithm}{Algorithms}
\crefname{algocf}{Algorithm}{Algorithms}
\crefname{theorem}{Theorem}{Theorems}
\crefname{lemma}{Lemma}{Lemmas}
\crefname{definition}{Definition}{Definitions}

\newtheorem{theorem}{Theorem}
\newtheorem{lemma}{Lemma}

\newtheorem{definition}{Definition}

\newcommand{\Description}[1]{}
\newcommand{\ccs}[1]{}

\input{glossary}

\title{NSP: Accelerating Variable-Length LLM Training via Nested Sequence Parallelism}

\author[1]{Yi'ou Wang}
\author[1,*]{Xiaoyang Li}
\author[1]{Yijie Zheng}
\author[1]{Shouda Liu}
\author[1]{Yuxuan Wang}
\affiliation[1]{ByteDance Seed}
\contribution[*]{Corresponding author}

\abstract{
\input{abstract}
}

\date{\today}
\correspondence{Xiaoyang Li at \email{lixiaoyang.x@bytedance.com}}

\begin{document}
\maketitle

\input{sections/introduction}
\input{sections/background}
\input{sections/motivation}
\input{sections/design}

\input{sections/implementation}
\input{sections/evaluation}
\input{sections/related_work}
\input{sections/conclusion}

\clearpage
\bibliographystyle{plainnat}
\bibliography{references}

\clearpage
\beginappendix
\input{sections/appendix}

\end{document}

%% file: glossary.tex
\newcommand{\sys}{NSP\xspace}
\newcommand{\sysname}{Nested Sequence Parallelism\xspace}

\newcommand{\sptree}{SP tree\xspace}
\newcommand{\spmax}{\ensuremath{\mathrm{sp}_{\max}}\xspace}

\newcommand{\seqlenj}{\ensuremath{L_j}\xspace}
\newcommand{\worldsize}{\ensuremath{P}\xspace}

\newcommand{\flexsp}{FlexSP\xspace}

\newcommand{\ulysses}{Ulysses\xspace}

%% file: abstract.tex
Long-context LLM training on long-tailed corpora faces a central
\emph{communication--balance tradeoff}. Such sequence-length heterogeneity
makes any single sequence-parallelism (SP) degree a poor fit for the workload:
a small degree leaves the few long sequences badly imbalanced, while a large
degree forces the many short sequences that dominate the workload to pay
excessive communication. Existing dynamic-SP systems mix SP degrees within a
batch, but to run several groups at once they partition the GPUs into
\emph{disjoint} groups, which reintroduces imbalance \emph{across} groups and
forces costly micro-batch workarounds.

We present \sys, a sequence-parallel training system that resolves this
tradeoff by \emph{nesting} differently sized SP groups on shared GPUs within a
single training iteration. This lets long sequences use larger SP groups
while keeping short sequences on smaller ones, so communication is incurred
only where needed and load is balanced \emph{per GPU} rather than per group.
\sys realizes this idea with a tree-structured routing planner that assigns
sequences under memory constraints and an executor that exploits the resulting
hierarchy through inter-level phase streaming and tree-level recomputation.
\sys supports common SP backends and requires no model changes.
We evaluate \sys on Qwen3-MoE workloads with up to 384K-token contexts across
multiple long-tail datasets on an internal production GPU cluster. Across these
settings, \sys consistently improves end-to-end training throughput,
outperforming Static SP by up to 1.48$\times$ and FlexSP by up to
1.16$\times$.

%% file: sections/introduction.tex
\section{Introduction}
\label{sec:intro}

Large language models (LLMs) are increasingly trained with long context windows,
ranging from a few thousand tokens to hundreds of thousands or
more~\cite{jacobs2023ulysses,liu2024ringattention,bytescale2024}. Such contexts
place heavy pressure on the memory and compute resources of a single GPU.
Sequence parallelism (SP)~\cite{jacobs2023ulysses, liu2024ringattention} is an
essential technique for long-context LLM training: it shards each sequence
along the token dimension across GPUs, reducing per-GPU activation memory and
attention computation so that longer sequences can fit and run efficiently.

Real training corpora, however, are not composed of uniformly long sequences.
This distribution creates a dilemma when
choosing a static SP degree. A small SP degree leaves the long sequences
imbalanced or memory-constrained; a large SP degree improves their balance but
forces the many short sequences that dominate the batch to pay communication
they do not need. The top row of \Cref{fig:intro} illustrates this dilemma.

Dynamic-SP systems~\cite{flexsp2025asplos} exploit this mismatch by giving each
sequence its own SP degree, reserving large groups for the few long sequences
while the dominant short ones run in small, cheaper groups. To execute multiple
groups concurrently, however,
these systems partition the GPUs among the groups, so that the groups are
necessarily disjoint and each GPU still serves a single SP degree within a
pass. Balancing the running time of these disjoint groups then requires
either packing many short sequences into one group or dividing the batch
into successive micro-batches, both of which forfeit much of the efficiency
that the smaller groups were intended to deliver. The bottom-left panel of
\Cref{fig:intro} shows this residual imbalance.

Our key idea is to allow SP groups of different sizes to
\emph{share} the same GPUs within a single forward--backward pass, rather than
partitioning the GPUs among them: a few large groups for the long sequences are
nested within many small groups for the short ones. Communication is then
incurred only where it is necessary, and load is balanced \emph{per GPU} rather
than per group. Each GPU can participate in a large group for a long sequence
and use the remaining time and memory budget for short sequences in smaller
groups, all within a single pass, reducing the need for extra micro-batches.

However, turning this idea into a practical system raises two challenges:

\begin{itemize}
  \item \textbf{Combinatorial planning.} The system must decide where each
    sequence should run among many possible SP group sizes and placements, while
    satisfying per-GPU memory constraints and controlling compute and
    communication cost.
  \item \textbf{Efficient execution.} Once sequences are assigned to
    specific SP groups, the runtime must execute the resulting nested plan
    efficiently, without turning the richer execution structure into new
    critical-path bottlenecks.
\end{itemize}

To address these challenges, we present \sysname~(\sys), a sequence-parallel training
system for long-tail workloads. \sys is organized into two components: a
\emph{Planner} and an \emph{Executor}. The Planner defines
a structured schedule space by constraining admissible SP groups to an \sptree,
where each node represents one aligned SP group and each root-to-leaf path
represents the groups that may share a GPU. Within this tree-structured space,
a profiled cost model scores candidate routing plans. The Planner then applies a
beam-search routing heuristic to select a low-cost plan for higher training
throughput.
The Executor takes this routing plan as input and executes it efficiently.
It further exploits the tree structure through
two optimizations. First, inter-level phase streaming overlaps computation and
communication across tree levels to reduce critical-path stalls. Second,
tree-level recomputation selectively rematerializes activations across tree
levels to reduce attention recomputation under memory limits.
\Cref{fig:intro} illustrates the resulting behavior of \sys on a simple
long-tail batch, compared with Static SP and disjoint dynamic SP.

\begin{figure}[t]
  \centering
  \includegraphics[width=\linewidth]{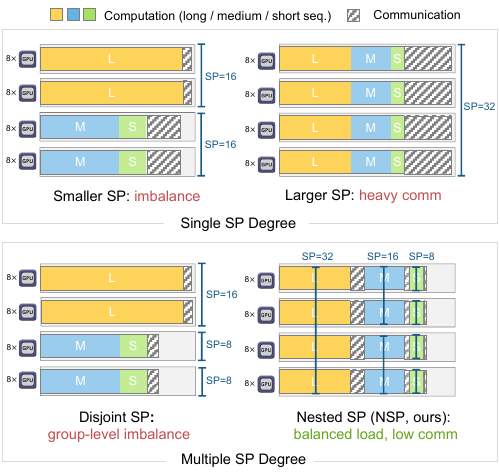}
  \caption{Per-GPU latency and memory under different SP strategies on a
    long-tail batch. A small Static SP leaves work badly imbalanced
    (top-left); a large Static SP balances work but pays heavy communication
    (top-right). Disjoint dynamic-SP groups still leave imbalance across
    groups (bottom-left). Nesting differently sized SP groups on shared GPUs
    within one pass keeps communication low and balances load per GPU
    (bottom-right).}
  \label{fig:intro}
  \Description{A four-panel comparison of Static SP, disjoint dynamic SP,
    and nested SP, showing per-GPU latency and memory.}
\end{figure}

We implement \sys on top of an internal PyTorch-based distributed training
framework with Fully Sharded Data Parallel (FSDP)-style model-state sharding, and evaluate it on an internal
64-GPU production cluster across two long-context LLM workloads, three datasets, and
maximum context lengths of 192K and 384K tokens. In our end-to-end results,
\sys outperforms Static SP by up to 1.48$\times$ and FlexSP by up to
1.16$\times$.

In summary, this paper makes the following contributions:
\begin{itemize}
  \item We show that the disjoint-group restriction of existing dynamic SP is
    the root of its micro-batch workarounds, and that nesting differently
    sized SP groups within a single pass removes them.
  \item We introduce a cost-guided Planner that constrains multi-SP scheduling
    to an \sptree and assigns sequences to nested SP groups under a memory
    budget.
  \item We design an Executor that runs nested SP-tree plans efficiently through
    inter-level phase streaming and tree-level recomputation.
  \item We implement \sys and show that it consistently outperforms static and
    dynamic SP baselines on long-tail long-context workloads.
\end{itemize}

%% file: sections/background.tex
\section{Background}
\label{sec:background}

\subsection{Parallelism for Long-Context Training}
\label{sec:background:parallelism}

Training long-context LLMs is constrained by the device memory needed to
hold model states (parameters, gradients, optimizer states) and the
activations of long sequences. Two forms of parallelism address these two
pressures and are used throughout \sys.

\paragraph{Fully Sharded Data Parallel}
Plain data parallelism replicates the full model on every GPU and splits the
batch across GPUs, synchronizing gradients with an all-reduce. Replicating
the model states is wasteful, so FSDP-style training~\cite{rajbhandari2020zero,zhao2023fsdp}
partitions parameters, gradients, and optimizer states across GPUs. Each GPU
all-gathers the parameters of a layer on demand during the forward and
backward passes and reduce-scatters the corresponding gradients. This reduces
the per-GPU model-state cost at the price of extra parameter communication.
When each micro-batch provides enough computation, parameter all-gathers can
be prefetched and reduce-scatters can be overlapped with computation from
neighboring layers, hiding much of this communication. Beyond memory reduction,
FSDP keeps a largely data-parallel execution model and exposes configurable
wrapping and sharding choices, making it flexible across model structures and
hardware topologies. \sys relies on FSDP to hold model states and treats it as
orthogonal to how sequences are parallelized.

\paragraph{Sequence Parallelism}
As context length grows, the activations of a single sequence may no longer
fit on one GPU, and per-GPU work becomes severely imbalanced across sequences
of different lengths. Sequence parallelism shards one sequence along the
token dimension across $k$ GPUs, so that per-GPU activation memory and
compute both scale down with $k$. The difficulty is attention: every query
must attend over the \emph{entire} sequence, whose tokens are scattered across
the $k$ GPUs. DeepSpeed-Ulysses~\cite{jacobs2023ulysses} resolves this with
all-to-all collectives. Before attention, an all-to-all reshuffles the
activations from a sequence-sharded layout to a head-sharded layout, so that
each GPU holds the full-length sequence for a subset of attention heads;
after attention, a second all-to-all restores the sequence-sharded layout for
the remaining layers. Attention is therefore computed locally over complete
sequences, but the two all-to-all collectives sit on the attention critical
path, and their cost grows relative to the shrinking per-GPU compute as $k$
increases. Other SP realizations replace this all-to-all with point-to-point
ring exchanges~\cite{liu2024ringattention} or combine the two along
orthogonal dimensions~\cite{fang2024usp}; \sys builds on Ulysses and we defer
the treatment of alternative SP implementations to \Cref{sec:design}.

\subsection{Sequence Packing}
\label{sec:background:packing}

Training corpora contain sequences of widely varying length. A simple way to
assemble fixed-shape batches is to pad every sequence to a common length, but
padding wastes computation and memory on filler tokens, which is especially
severe under the long-tail length distributions of real
corpora. Packing~\cite{krell2022sequencepacking} concatenates multiple
variable-length sequences into a single packed input up to a fixed token
budget, without padded tokens. It uses a segmented attention mask and adjusted
position indices so that each original sequence is processed independently and
tokens from different sequences do not attend to one another. Packing therefore
lets a batch be defined by a token budget rather than a fixed number of
sequences. It is the standard choice for long-context training and the default
setting throughout this paper; accordingly, \sys treats a batch as a collection of
variable-length sequences packed within per-GPU token budgets, rather than as a
padded rectangular tensor.

\subsection{Activation Recomputation}
\label{sec:background:recompute}

Even with sequence parallelism, the activations stored for the backward pass
can dominate memory in long-context training. Activation recomputation (also
called gradient checkpointing) trades compute for memory: selected
activations are discarded during the forward pass and recomputed on the fly
during the backward pass~\cite{chen2016checkpointing,korthikanti2023megatronsp}. Recomputation can
be applied at different granularities, from whole transformer layers down to
individual modules. In long-context training, recomputing attention is much
more expensive than recomputing most other modules, so unnecessary attention
recomputation directly hurts iteration time. \sys later exploits recomputation
at the granularity of its SP structure (\Cref{sec:design}).

%% file: sections/motivation.tex
\section{Motivation}
\label{sec:motivation}

\subsection{Observation 1: A Mismatch Between Real-World Sequence Distributions and Static Parallelism}
\label{sec:motivation:obs1}

Real-world training corpora are not uniform in length; they follow long-tail
distributions. \Cref{fig:longtail-dist} plots the length distribution of three
anonymized public sequence-length traces: roughly 94.0\% of the samples are
shorter than 8K tokens, while sequences longer than 128K account for only
0.12\%. Because batches are sampled from these traces, the per-batch length
distribution inherits the same long-tail shape.

\begin{figure}[!t]
  \centering
  \captionsetup[subfigure]{skip=1pt}
  \begin{subfigure}{\linewidth}
    \centering
    \includegraphics[width=\linewidth]{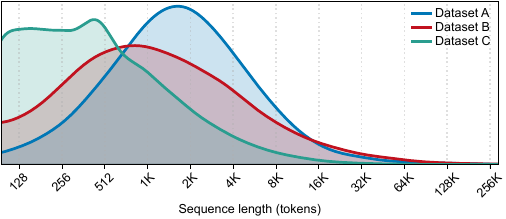}
    \caption{Corpus sequence-length distribution.}
    \label{fig:longtail-dist}
  \end{subfigure}
  \\[1pt]
  \begin{subfigure}{\linewidth}
    \centering
    \includegraphics[width=\linewidth]{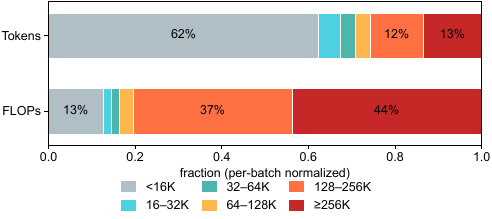}
    \caption{Token vs.\ FLOP composition for one representative batch
      (Qwen3-MoE-30B forward--backward pass).}
    \label{fig:longtail-flops}
  \end{subfigure}
  \caption{Real corpora are long-tailed (\subref{fig:longtail-dist}), and
    because attention cost is quadratic, even within a single batch the short
    sequences that dominate the token count contribute little of the compute,
    while a few long sequences dominate the FLOPs
    (\subref{fig:longtail-flops}). Balancing tokens and FLOPs at once is thus
    impossible without sharding long sequences.}
  \label{fig:longtail}
  \Description{Top: KDE of sequence lengths for three corpora with a heavy long
    tail. Bottom: stacked bars showing that short sequences hold most tokens
    but few FLOPs while long sequences dominate FLOPs in a single batch.}
\end{figure}

This skew creates a hard balance problem for distributed training. Due to the
quadratic cost of attention, even a workload dispatcher~\cite{orchmllm2025}
cannot equalize tokens and FLOPs simultaneously: balancing tokens per GPU
leaves FLOPs severely unbalanced, since a single long sequence carries more
FLOPs than a batch of short sequences of the same token count. A FLOPs-balanced
policy, on the other hand, must pile many short sequences onto the GPUs that
host them, placing heavy pressure on activation memory and even causing OOMs.

Sequence parallelism was originally introduced to relieve the memory
pressure of long sequences, but it also offers an unexpected handle on the
balance problem. By sharding a sequence along the token dimension across
$k$ GPUs, SP balances both tokens and FLOPs within that sequence across these
$k$ GPUs, so the remaining imbalance is confined to the distribution of
sequence workloads across SP groups. As the uniform SP degree increases, the
best attainable balance improves monotonically (proved in
\Cref{app:pairing}). However, larger SP groups also incur higher communication
overhead, much of it wasted on the short sequences that gain little from the
extra sharding. This makes balance versus communication a fundamental dilemma
for static SP in long-tail training.

\subsection{Observation 2: Existing Dynamic SP Mitigates the Tradeoff, But Imposes Costly Restrictions}
\label{sec:motivation:obs2}

Prior work proposes \emph{dynamic} sequence
parallelism \cite{flexsp2025asplos}. The key observation is that a large SP
size is only needed for the few sequences that require it. Dynamic SP routes
samples of different lengths to SP groups of different sizes, lowering the
communication overhead on short sequences while still scaling out long ones.

Existing dynamic-SP designs, however, are limited by a strong structural
restriction: their SP groups must be \emph{disjoint} within a single
forward--backward pass. \flexsp's insight is to run short sequences in small,
communication-cheap groups and reserve large groups for the few long ones; its
MILP solver then assigns samples to balance per-group time. Yet disjointness
fixes the GPU partition for the entire pass, so balance and communication pull
against each other: tightening balance pushes the plan back toward large,
homogeneous groups that lose the small-group savings, unless \flexsp splits the
batch into more micro-batches---at the cost of fixed overheads, smaller GEMMs,
and weaker communication hiding.

\subsection{Opportunity: Nesting Different SP Groups in One Pass}
\label{sec:motivation:opportunity}

All of these costs trace back to one assumption: that the SP groups active
in a single pass must be \emph{disjoint}, so each GPU belongs to exactly
one group. Disjointness fixes the GPU partition for the whole step and is
what forces balance to be restored through extra micro-batches. It also points
to the way out---letting SP groups of different sizes share GPUs within one
forward--backward pass.

Such nesting promises a better communication--balance tradeoff.
Communication stays low where it should: short
sequences keep using small groups, while only the few long sequences use
large ones. Balance is then recovered per GPU rather than per group, since a
GPU can contribute to a large group for long sequences and use the remaining
time and memory budget for short sequences in smaller groups. Both happen
within a single pass, reducing the need to force balance through extra
micro-batches.

This nested structure also creates new system-level opportunities. First,
communication in one nested group can be overlapped with computation in another
nested group. Second, because the sequences assigned to different groups can
differ in compute and storage cost, the runtime can choose recomputation
policies at group granularity. These opportunities motivate the routing and
executor design of \sys (\Cref{sec:design}).

%% file: sections/design.tex
\section{Nested Sequence Parallelism}
\label{sec:design}

\Cref{fig:design-overview} gives an overview of \sys. 
\sys consists of two modular components---an \emph{NSP Planner} and an \emph{NSP Executor}---
that together adapt the parallel plan to the length
distribution of each batch.
Given a batch, the Planner (\Cref{sec:design:planner}) produces
an SP-tree routing plan that assigns each sample to a node of the \sptree.
The assignment optimizes a profiled cost-model proxy that combines
computation and communication costs.
The Executor (\Cref{sec:design:executor}) takes
the resulting \sptree plan and executes it efficiently with inter-level phase
streaming and tree-level recomputation.
The rest of this section describes the Planner
(\Cref{sec:design:planner}) and Executor (\Cref{sec:design:executor}).

\begin{figure}[t]
  \centering
  \includegraphics[width=\linewidth]{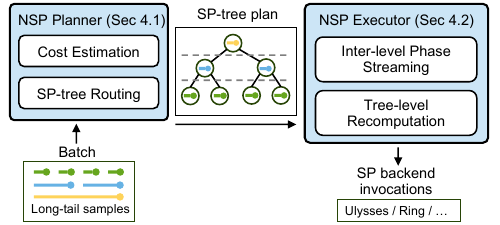}
  \caption{\sys system overview. The Planner routes a long-tail batch onto
  an SP tree using profiled cost estimates, producing an SP-tree plan.
  The Executor realizes the plan through inter-level phase streaming and
  tree-level recomputation while invoking existing SP
  backends.}
  \label{fig:design-overview}
  \Description{High-level architecture of NSP with two components. The
  Planner performs cost estimation and SP-tree routing for an input batch,
  producing an SP-tree plan. The Executor performs inter-level phase
  streaming and tree-level recomputation, then invokes SP
  backends such as Ulysses and Ring.}
\end{figure}

\begin{table}[t]
  \centering
  \caption{Notation used throughout \Cref{sec:design}.}
  \label{tab:notation}
  \small
  \begin{tabular}{ll}
    \toprule
    Symbol & Meaning \\
    \midrule
    \worldsize                        & Total number of GPUs (a power of two) \\
    $p$                               & GPU index, $p \in \{0, \dots, \worldsize-1\}$ \\
    \spmax                            & Largest admissible SP degree on the cluster \\
    $\mathcal{B}$                     & Input batch, $\mathcal{B} = \{1, \dots, M\}$ \\
    $j$                               & Sample index, $j \in \mathcal{B}$ \\
    \seqlenj                          & Sequence length of sample $j$ \\
    $\mathcal{T}$                     & The \sptree \\
    $u$                               & Tree node in $\mathcal{T}$ \\
    $R(u)$                            & GPU-rank set associated with node $u$ \\
    $|R(u)|$                          & SP degree of node $u$ \\
    $a$                               & Routing map from samples to \sptree nodes \\
    $B_{\mathrm{tok}}$                & Per-GPU token budget \\
    $\mathrm{Tok}_p(a)$               & Sharded token load on GPU $p$ \\
    $C_p$                             & Cost-model load on GPU $p$ \\
    $\widehat{\mathrm{Load}}(a)$      & Cost-model iteration-time proxy for plan $a$ \\
    \bottomrule
  \end{tabular}
\end{table}

\subsection{NSP Planner}
\label{sec:design:planner}

\subsubsection{SP Tree Topology}
\label{sec:design:tree}

In its most general form, nested-SP routing would jointly choose the
cooperating GPU set and execution order for each sample in the batch.
However, without further structure, this problem is impractical to solve
efficiently and implement directly in a distributed training system.
To balance expressiveness and implementability, \sys extracts a
structured nested execution space represented by an SP tree.
\Cref{tab:notation} summarizes the notation used throughout this section.

\begin{definition}[\sptree]
  \label{def:sptree}
  Let $\spmax=2^K$ be the largest admissible SP degree. The \sptree
  $\mathcal{T}$ is a complete binary tree with node set
  $\{(d,i)\mid 0\le d\le K,\ 0\le i<2^d\}$. A node
  $u=(d,i)$ is associated with the contiguous, aligned rank set
  \[
    R(u)=\{r \mid i2^{K-d} \le r < (i+1)2^{K-d}\}.
  \]
\end{definition}

Each node $u$ defines one admissible SP group with size
$|R(u)|=2^{K-d}$.
A sample assigned to $u$ is executed with sequence
parallelism over the ranks $R(u)$. 
Leaf nodes correspond to single-GPU execution.
A per-batch routing plan maps each sample to one node of $\mathcal{T}$, allowing samples in the same
batch to use different SP degrees. 
When multiple routed nodes are active,
a GPU may participate in work along the ancestor chain from its leaf to
the root. \Cref{fig:sptree-anatomy} illustrates these node ranges,
leaf-to-root chains, and disjoint-group antichains.

The resulting SP-tree routing space is expressive while remaining
regular. It generalizes disjoint-group routing, since any disjoint-group
plan corresponds to an antichain of the \sptree.
This restriction is not merely heuristic: in an abstract homogeneous setting,
\Cref{app:tree-equivalence} shows that any laminar power-of-two schedule can
be relabeled into an \sptree schedule with the same makespan.
On real clusters, the tree also serves as a hardware-conscious placement
template, keeping smaller groups within high-bandwidth local domains and
using larger groups only when wider cooperation is needed.
Together, these properties make the SP-tree space easier to search and execute,
while preserving structured schedules and providing a regular hierarchy for
system optimizations.

\begin{figure}[t]
  \centering
  \includegraphics[width=\linewidth]{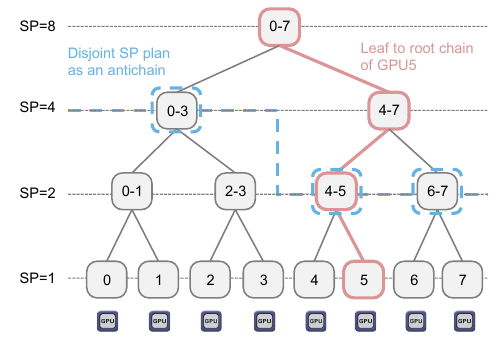}
  \caption{The \sptree for $\spmax=8$. Each depth fixes an SP degree, and
    every node $u$ is annotated with its contiguous aligned rank range
    $R(u)$. A GPU participates in work along one leaf-to-root chain
    (highlighted for $g_5$), nesting SP degrees $8,4,2,1$; a disjoint-group
    SP plan is an antichain of the tree (dashed).}
  \label{fig:sptree-anatomy}
  \Description{A complete binary SP-tree over eight GPUs. Depths are
  labeled with SP degrees 8, 4, 2, 1; nodes are labeled with their rank
  ranges. One leaf-to-root chain is highlighted, and one antichain is
  marked with a dashed cut.}
\end{figure}

\subsubsection{Routing Objective and Cost Model}
\label{sec:design:formulation}
\label{sec:design:cost-estimation}

We define the per-batch routing objective over the \sptree,
together with the cost model that evaluates a candidate plan. This
specifies what the runtime heuristic in \Cref{sec:design:routing}
optimizes.

Consider an input batch $\mathcal{B}=\{1,\ldots,M\}$, where sample
$j$ has length \seqlenj. A routing plan is a mapping
$a:\mathcal{B}\rightarrow\mathcal{T}$, where $a(j)$ is the tree node
selected for sample $j$. A plan is evaluated by two quantities: a
per-GPU memory load and an overall iteration-time proxy.

The memory load uses the sharded token count, accumulated over the
leaf-to-root path containing each GPU:
\[
\mathrm{Tok}_p(a)=
  \sum_{\substack{j\in\mathcal{B}:\\ p\in R(a(j))}}
  \frac{\seqlenj}{|R(a(j))|}.
\]
The iteration-time proxy $\widehat{\mathrm{Load}}(a)$ is a conservative
estimate produced by the cost model defined below. The routing objective
minimizes it subject to a per-GPU token budget:
\[
\begin{aligned}
\min_{a}\quad
& \widehat{\mathrm{Load}}(a) \\
\text{s.t.}\quad
& \mathrm{Tok}_p(a) \le B_{\mathrm{tok}},
&& \forall p\in\{0,\ldots,\worldsize-1\}.
\end{aligned}
\]
The token constraint acts as a memory safety guard; the implementation
sets $B_{\mathrm{tok}}$ from the available activation memory and the
desired token-balance ratio.

Estimating the time load is more involved than in disjoint dynamic-SP systems. A
disjoint-group plan places each GPU in a single SP group, so each sample is
processed at one SP degree and a single aggregate per-sample cost is enough. The \sptree instead co-locates several SP
levels on the same GPUs, so \sys estimates the time load with a
hierarchical cost model whose parameters are obtained from offline
profiling, built from the SP-backend level up to the full iteration.

\textbf{SP-backend level.} Each backend exposes a phase-cost interface.
For a node $u$ under plan $a$, the backend bound to $u$ predicts three
attention phase costs---a communication prologue $T^{\mathrm{prol}}_u(a)$,
local attention compute $T^{\mathrm{comp}}_u(a)$, and a communication
epilogue $T^{\mathrm{epil}}_u(a)$---separately for the forward and
backward passes, and we write $\tau_u(a)$ for this phase-cost tuple.
For each backend, \sys first derives the attention FLOPs and communication
volume from the routed inputs. It then combines these quantities with
backend-specific offline profiles of compute throughput and communication
bandwidth to predict the compute and communication phase times.

\textbf{Tree level.} For a pass $\sigma\in\{\mathrm{fwd},\mathrm{bwd}\}$
and a GPU $p$, let
$\Pi_p(a)$ be the active nodes on $p$'s leaf-to-root path. \sys aggregates
their backend phase costs into a conservative per-GPU attention time
\[
A^{\sigma}_p(a)
=
\sum_{u\in\Pi_p(a)}
\left(
T^{\mathrm{prol},\sigma}_u(a)
+ T^{\mathrm{comp},\sigma}_u(a)
+ T^{\mathrm{epil},\sigma}_u(a)
\right),
\]
where the three terms are the backend prologue, compute, and epilogue
costs for pass $\sigma$.

\textbf{Model level.} Non-attention work (projections and MLP) uses no SP
collectives, so it is added on top of the attention time rather than
inside it. From the token count on $p$ and the profiled dense-compute
throughput, \sys estimates a non-attention time $O^{\sigma}_p(a)$ and
forms the per-GPU stage time
$D^{\sigma}_p(a)=A^{\sigma}_p(a)+O^{\sigma}_p(a)$.

\textbf{Iteration level.} The routing proxy covers the forward
($\mathrm{fwd}$) and backward ($\mathrm{bwd}$) stages. Each stage reuses
the levels above with its own phase costs, and the proxy sums the
per-stage critical GPU:
\[
\widehat{\mathrm{Load}}(a)
=
\max_p D^{\mathrm{fwd}}_p(a)
+ \max_p D^{\mathrm{bwd}}_p(a).
\]

All throughput and bandwidth coefficients are profiled offline per model,
deployment environment, backend, and SP degree, and are treated as fixed during routing,
so the online loop only combines them with the current length
distribution. The model is used only as a routing proxy, not as a
standalone latency predictor.

\subsubsection{Routing Algorithm}
\label{sec:design:routing}

\begin{algorithm}[t]
  \caption{Prefix-Completion SP-Tree Routing}
  \label{alg:prefix-completion-routing}
  \KwIn{Samples $\mathcal{B}$ with lengths $\{\seqlenj\}$; \sptree
    $\mathcal{T}$; prefix size $k$; beam width $w$; token budget
    $B_{\mathrm{tok}}$}
  \KwOut{SP-tree plan $a$}
  Sort $\mathcal{B}$ by nonincreasing sequence length\;
  $\mathcal{H} \leftarrow$ first $k$ samples in $\mathcal{B}$;
  $\mathcal{R} \leftarrow \mathcal{B}\setminus\mathcal{H}$\;
  $\mathcal{Q} \leftarrow \{(\emptyset,\mathbf{0},\mathbf{0})\}$\;

  \tcp{Long-prefix tree search}
  \ForEach{$j \in \mathcal{H}$}{
    $\mathcal{N} \leftarrow \emptyset$\;
    \ForEach{$S=(a,\mathbf{C},\mathbf{T}) \in \mathcal{Q}$}{
      \ForEach{$u \in \mathrm{FeasibleNodes}(\mathcal{T},S,j)$}{
        $a' \leftarrow a\cup\{(j,u)\}$\;
        $\mathbf{C}',\mathbf{T}' \leftarrow \mathrm{CostModel}(a')$\;
        \If{$\max_p T'_p \le B_{\mathrm{tok}}$}{
          $S' \leftarrow (a',\mathbf{C}',\mathbf{T}')$\;
          $\mathcal{N} \leftarrow \mathcal{N}\cup\{S'\}$\;
        }
      }
    }
    $\mathcal{Q} \leftarrow$ the $w$ states in $\mathcal{N}$ with the
      smallest $\max_p C_p$\;
  }

  \tcp{Greedy short-tail completion}
  $\mathcal{C} \leftarrow \emptyset$\;
  \ForEach{$S \in \mathcal{Q}$}{
    $\mathcal{U}_{\mathrm{tail}} \leftarrow$ effective leaf units in $\mathcal{T}$\;
    \ForEach{$j \in \mathcal{R}$ in nonincreasing $\seqlenj$}{
      $u \leftarrow$ least-loaded unit in $\mathcal{U}_{\mathrm{tail}}$
        under $S$ that satisfies $B_{\mathrm{tok}}$\;
      \If{no such $u$ exists}{
        $u \leftarrow$ unit in $\mathcal{U}_{\mathrm{tail}}$ with the smallest
          token load\;
      }
      Extend $S$ by assigning $j$ to $u$ and updating $(\mathbf{C},\mathbf{T})$\;
    }
    $\mathcal{C} \leftarrow \mathcal{C}\cup\{S\}$\;
  }
  $S_{\mathrm{rst}} \leftarrow
    \arg\min_{(a,\mathbf{C},\mathbf{T})\in\mathcal{C}}
      \max_p C_p$\;
  \Return $S_{\mathrm{rst}}$\;
\end{algorithm}

The routing problem in \Cref{sec:design:formulation} is combinatorially
hard to solve exactly. A plan assigns each sample in the batch to one
\sptree node, so the number of candidate plans grows exponentially with
the batch size, and each assignment couples all GPUs covered by the
chosen node through both the estimated load and the token budget. Even
with a single fixed SP degree, balancing the per-GPU load already reduces
to makespan minimization on parallel machines, which is
NP-hard~\cite{garey1979computers}. The full
problem is larger, adding a per-sample SP-size choice and the token
budget, and its objective is the cost-model load rather than a simple
closed-form cost. Because routing runs once per batch inside the training
loop, \sys forgoes exact optimization and uses a beam-search routing heuristic
that exploits the structure of long-tail batches.

The Planner incrementally constructs an SP-tree plan. We denote a search
state by $S=(a,\mathbf{C},\mathbf{T})$. Here
$a\subseteq \mathcal{B}\times\mathcal{T}$ is a partial assignment from
processed samples to tree nodes. The vectors
$\mathbf{C}=(C_0,\ldots,C_{\worldsize-1})$ and
$\mathbf{T}=(T_0,\ldots,T_{\worldsize-1})$ are cached summaries induced by
$a$: $C_p$ is the load that the cost model attributes to GPU $p$, and
$T_p=\mathrm{Tok}_p(a)$ is its sharded token load. The cost model summarizes the conservative
forward--backward estimate into a single scalar per GPU, while execution-time
overlap and tree-level recomputation are left to the Executor. Initially,
$S_0=(\emptyset,\mathbf{0},\mathbf{0})$.
Extending a state with assignment $j\mapsto u$ produces a new state
$S'=(a',\mathbf{C}',\mathbf{T}')$, where
$a'=a\cup\{(j,u)\}$. The load summary $\mathbf{C}'$ is
updated using the cost-model estimate for placing sample $j$ on node
$u$, while the token-load summary $\mathbf{T}'$ adds the corresponding
sharded token load on the GPUs covered by $u$.
A complete plan is feasible if every
sample is assigned and every GPU satisfies the token budget. Among
feasible plans, the Planner selects the plan with the smallest critical load
$\max_p C_p$, a
tractable per-GPU surrogate for the cost-model objective
$\widehat{\mathrm{Load}}(a)$.

To bound the cost of per-batch routing, \sys restricts the candidate
space along three dimensions. First, it applies fine-grained search only
to the longest prefix of the batch and completes the remaining short
tail greedily. This focuses the search on the samples that have the
largest impact on the critical path, while relying on coarse placement
for the many short samples whose marginal impact on the final iteration
time is smaller. Second, it retains a bounded beam of partial plans after
each expansion. This preserves multiple residual-capacity profiles
without enumerating all prefixes. Third, it processes samples from long
to short and can enforce a monotone-SP constraint: if sample
$j$ is longer than sample $j'$, then $j$ cannot use a smaller SP degree
than $j'$. This concentrates the search on plans consistent with the
long-tail structure: long samples tend to use larger SP groups to reduce
per-GPU compute and memory pressure, while short samples fill lower
levels with less communication.

\Cref{alg:prefix-completion-routing} summarizes the resulting routing
heuristic. The algorithm first separates the batch into a searched prefix
and a greedily completed tail. It then performs a level-by-level
expansion over the prefix: each retained state represents one residual
capacity profile of the \sptree, and each expansion places the next
sample on a feasible tree node. Token-infeasible states are filtered
immediately, and only the $w$ states with the smallest critical load
survive to the next prefix sample. After the prefix search, the
heuristic completes each surviving state by packing tail samples into the
effective leaf units under the current load and token constraints. The
returned plan is the completed state with the smallest critical load.

\subsection{NSP Executor}
\label{sec:design:executor}

The Executor turns the populated \sptree from the Planner into a
per-node execution plan. It then applies two tree-structured optimizations:
inter-level phase streaming and tree-level recomputation.

\subsubsection{Inter-level Phase Streaming}
\label{sec:design:overlap}

\sys reduces exposed communication in two stages. Tree-shaped routing
first lowers the communication on the critical path
(\Cref{sec:design:tree}); the Executor then hides the communication that
remains through inter-level phase streaming
(\Cref{fig:overlap-timeline}). The tree exposes an additional overlap
dimension: its nodes carry independent attention work once their inputs are
ready, so the Executor can overlap one node's communication with another
node's compute.

This inter-level streaming complements, rather than replaces, any overlap
a backend already performs internally, and it leaves the backend's own
execution unchanged. It is useful because a backend hides its own
communication only partially, if at all: \ulysses~\cite{jacobs2023ulysses},
for instance, places its all-to-all directly on the attention critical
path. Part of the communication therefore tends to stay exposed.

The Executor streams the three backend phases of
\Cref{sec:design:formulation}---communication prologue, attention compute, and
communication epilogue---across the active nodes on a GPU's path and the
available physical resources. Let $u_i$ and $u_{i+1}$ be two consecutive
nodes in the Executor's issue order. The communication prologue of
$u_{i+1}$ can run during the attention compute of $u_i$; once $u_i$
finishes compute, its communication epilogue is enqueued and can overlap
with compute from later nodes. The Executor maintains one GPU-compute
queue for attention kernels and one communication queue for collective
phases. Within a node, the communication prologue, compute, and
communication epilogue keep their dependency order across queues; across
nodes, each queue follows the issue order, so a later collective does not
steal bandwidth from an earlier one on the chosen critical path. The
compute and communication queues then proceed concurrently. This streaming
is an execution-time optimization: the Planner uses the conservative serial
sum in \Cref{sec:design:formulation}, and the Executor then masks the
remaining exposed communication through inter-level streaming.

\begin{figure}[t]
  \centering
  \includegraphics[width=\linewidth]{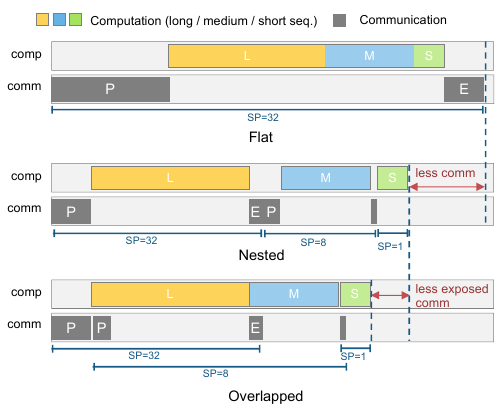}
  \caption{Reducing exposed communication in two stages. Disjoint-group SP runs a
  long sequence with a large prologue and epilogue. Nested routing first
  assigns samples to different SP levels, reducing the communication on
  the critical path. The Executor then streams levels so that
  communication phases overlap with attention compute from other levels;
  reordering the issue order further shortens the exposed tail.}
  \label{fig:overlap-timeline}
  \Description{Four timeline diagrams comparing disjoint-group sequence
  parallelism, nested routing, inter-level overlap, and reordered
  scheduling. Each timeline has a compute lane and a communication lane,
  with dashed vertical lines marking the makespan.}
\end{figure}

\subsubsection{Tree-Level Recomputation}
\label{sec:design:recompute}

Beyond communication, long-context training also suffers from expensive
attention recomputation under activation-memory limits, which \sys reduces
with a second tree-enabled optimization.
Activation checkpointing trades memory for recomputation, and existing
policies make the save/recompute decision at layer
granularity~\cite{chen2016checkpointing,korthikanti2023megatronsp,shoeybi2019megatron}. This is
a poor fit for attention: saving attention outputs raises a decoder
layer's activation memory by more than $1\times$, which is often
infeasible under tight budgets, so systems fall back to full
recomputation---yet attention is the most expensive part to recompute for
long sequences.

The tree-shaped execution plan naturally exposes level-granular
recomputation units. The routed tree usually exhibits a length
gradient: long, expensive samples concentrate near upper levels, while
short samples populate lower levels. This structure makes the
save/recompute decision cost-effective. Upper-level long samples
have the highest recomputation cost per saved byte because
recomputation repeats both attention compute and SP communication; at
the same time, their attention outputs are sharded across larger SP
groups, so the per-GPU storage increase is moderated. \sys therefore
spends the activation budget on these high-return levels and
recomputes cheaper short-sequence levels during backward.

\begin{figure}[t]
  \centering
  \includegraphics[width=\linewidth]{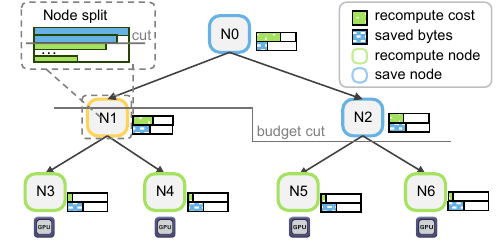}
  \caption{Save/recompute marking for tree-level recomputation. \sys traverses the
  routed SP tree from upper levels to lower levels, saves the
  high-return nodes admitted by the activation budget, and recomputes
  the remaining lower-level nodes during backward.}
  \label{fig:recompute-marking}
  \Description{An SP tree with a memory budget cut, save and recompute
  nodes, and per-node bars for saved bytes and recomputation cost.}
\end{figure}

Given this tree order, the Executor derives a save/recompute plan
from an activation-memory budget. It traverses the
\sptree from the root toward the leaves and accumulates the estimated
bytes of saved attention-output slices. Once the budget is reached,
the remaining lower levels are marked \textsc{recompute}; the levels
already admitted into the budget are marked \textsc{save}. If the
budget cut falls inside a tree node, \sys can further split the node
by sample subset and save the subset with larger recomputation cost.

At runtime, the forward pass extracts and stores only the attention
output slices marked \textsc{save}; slices marked \textsc{recompute}
are released once they are no longer needed in the forward pass.
During backward, \sys recomputes attention only for the nodes marked
\textsc{recompute} and combines the recomputed slices with the saved
ones to reconstruct the full NSP attention result.

The plan is generated consistently across ranks. Under the uniform-token
assumption used by NSP slicing, all ranks in an SP group use the same
per-node byte estimate and the same budget. 
The plan is therefore group-uniform: a tree node is either saved by all
ranks in the corresponding SP group or recomputed by all of them, so no
rank enters a collective that another rank skips.
Because \textsc{recompute} is assigned mainly to lower-level
short-sequence nodes, the residual recomputation cost is small compared
with the original attention workload and has limited impact on the
Planner's routing objective. The Planner therefore need not be made
recompute-aware: keeping routing and the save/recompute decision
decoupled costs little in practice and keeps the routing objective
simple.

%% file: sections/implementation.tex
\section{Implementation}

We implement \sys on top of an internal PyTorch-based distributed
training framework that provides FSDP-style model-state sharding. \sys
runs above this framework and supplies the sequence-parallel execution
layer. We use NCCL for collective communication and FlashAttention kernels
for attention computation~\cite{dao2024flashattention2}.

\paragraph{SP-tree group management.}
\sys creates NCCL communication groups dynamically and caches them for
reuse, similar to prior dynamic-SP systems. The SP-tree structure does
not increase the number of groups compared with disjoint dynamic-SP solutions: all
groups are aligned tree nodes, and each GPU participates in at most one
group per tree level.

\paragraph{Execution engine.}
\sys separates the input layout from the execution layout. The input
pipeline may deliver packed sequences in a conventional fixed-SP layout,
chosen only to satisfy the memory requirement of the longest sample and
to remain compatible with upstream modules. Before LLM execution, \sys
uses one all-to-all to repartition the same samples according to the
current SP-tree routing plan; after LLM execution, a symmetric all-to-all
restores the conventional layout for loss computation and downstream
modules. This boundary keeps the input pipeline independent of nested SP
and lets \sys act as a drop-in execution layer.

\sys supports multiple SP backends via a common backend interface. Each
backend implements forward and backward \emph{communication prologue},
\emph{compute}, and \emph{communication epilogue} routines. \sys composes
these routines over the SP tree to enable streamed execution and
tree-level recomputation. A backend also exposes cost-model hooks,
which \sys queries and aggregates when evaluating a candidate routing
plan.

\paragraph{Planner integration.}
The online planner runs on CPU and uses the heuristic described in
\Cref{sec:design:routing}. Its runtime is at the millisecond scale in
our experiments, so it is negligible compared with a training iteration.
Our implementation can prefetch upcoming batch metadata and generate
routing plans ahead of execution.

%% file: sections/evaluation.tex
\begin{figure*}[!t]
  \centering
  \includegraphics[width=\textwidth]{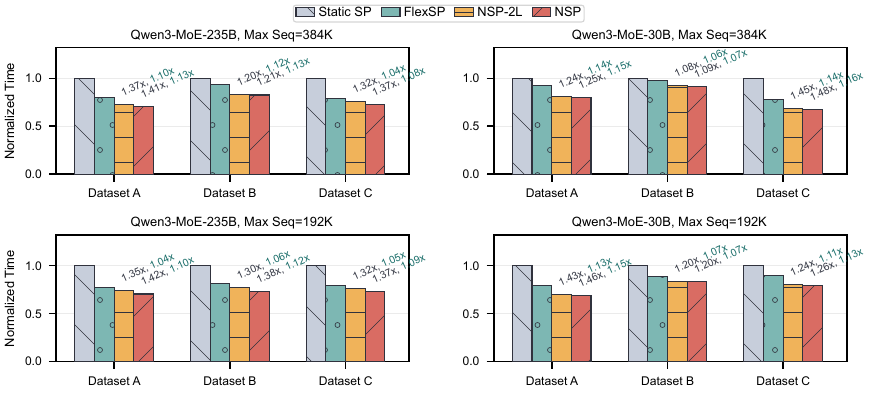}
  \caption{End-to-end normalized iteration time across model sizes,
  maximum context lengths, and datasets. Values are normalized to Static SP
  under the same configuration; lower is better.}
  \label{fig:e2e-performance}
  \Description{A two-row by two-column grouped bar chart comparing
  normalized iteration time across systems, datasets, model sizes, and
  maximum context lengths.}
\end{figure*}

\section{Evaluation}

\subsection{Setup}

\textbf{Baseline Systems.}
We compare \sys with existing sequence-parallel training strategies.
Static SP follows the DeepSpeed-Ulysses style sequence-parallel
execution~\cite{jacobs2023ulysses} and uses a single SP degree for all
samples. FlexSP~\cite{flexsp2025asplos} is a batch-adaptive disjoint-group
baseline: it partitions a batch into micro-batches and uses an MILP
solver to select a set of disjoint SP groups for each micro-batch. We
reproduce FlexSP's SP strategy based on its published description and
open-source implementation, including micro-batch partitioning and
MILP-based SP assignment. To isolate the benefit of full tree nesting, we also
introduce NSP-2L, an ablated variant of \sys. NSP-2L uses the same planner
and executor as \sys, but restricts the candidate SP levels to only the
maximum SP level and the leaf level ($\mathrm{SP}=1$), disabling all
intermediate SP-tree levels.

\textbf{Hardware Environments.}
We evaluate on an internal 64-GPU NVIDIA production cluster. To avoid exposing
deployment-specific hardware details, we report normalized performance and omit
the specific GPU type. All systems in each experiment run under the same
hardware, software stack, model configuration, and input batches.

\textbf{Experimental Workloads.}
We conduct experiments on two long-context LLM workloads:
Qwen3-MoE-30B and Qwen3-MoE-235B~\cite{qwen2025qwen3}. We use three public
sequence-length traces from the FlexSP artifact~\cite{flexsp2025asplos},
anonymized as Dataset A, Dataset B, and Dataset C. These traces have different
sequence-length distributions and contain only sampled sequence lengths, not
raw text; we use them solely to reproduce realistic long-tail batch
compositions for system-performance evaluation.
To evaluate both moderate and
extreme long-context settings, we run each workload with maximum context
lengths of 192K and 384K tokens. Sequences longer than the configured
maximum context length are filtered out during training.

\textbf{Protocols.}
We use mixed-precision training with AdamW. Each training batch contains
2M tokens, and all systems use sequence packing. We use an FSDP size of
64, evenly sharding model states across all GPUs. To reduce memory
pressure for Qwen3-MoE-235B, we use an expert-parallel size of 8 and
offload optimizer states to CPU memory. For a fair comparison, we
implement all baselines in the same framework so that they share the same
model kernels, communication backend, data pipeline, and training
configuration unless otherwise stated. We tune each baseline and report
its best observed performance. For Static SP, we sweep SP degrees that
fit in memory and select the fastest configuration. For FlexSP, we first
run offline profiling for each model and provide the
profiled parameters to its solver. Although FlexSP can automatically
determine micro-batch partitioning and SP assignment, we observe that it
sometimes over-partitions a batch and degrades performance. We therefore
manually sweep the maximum number of micro-batches, up to five as used
in the open-source FlexSP implementation, and report the best result. We
use the first 50 iterations for warmup, average iterations 50--100, and
normalize each result to Static SP under the same configuration.

\subsection{End-to-End Performance}

\Cref{fig:e2e-performance} shows the end-to-end iteration time of
\sys and the baselines. The evaluation covers two Qwen3-MoE workloads,
three long-context datasets, and maximum context
lengths of 192K and 384K. Across these settings, \sys consistently
achieves the best performance, with up to 1.48$\times$ speedup over
Static SP and up to 1.16$\times$ speedup over FlexSP. This
demonstrates that nested sequence parallelism remains effective across
diverse workload distributions.

Static SP performs poorly because a single SP degree cannot match the
varied sequence lengths within a batch. A large SP degree can reduce the
critical load of long samples by spreading their attention work across
more GPUs, but applying the same degree to short samples introduces
unnecessary communication. Conversely, choosing a smaller SP degree
reduces communication for short samples but leaves long samples with
severe load imbalance and, in extreme cases, memory pressure. FlexSP
mitigates this problem by forming disjoint SP groups and adapting the SP
assignment at the micro-batch level. However, this disjoint-group structure still
requires splitting a batch into multiple micro-batches to improve load
balance, which reduces the amount of work per execution unit and can
lower communication overlap and device utilization. In contrast, \sys
places samples with different lengths at appropriate positions in the
SP tree.
Because the tree allows nested rather than disjoint SP groups, \sys can
balance computation and communication at a finer granularity within the
same batch.

For Qwen3-MoE-235B, Static SP requires attention recomputation under the
memory limit, while \sys reduces much of this overhead through
tree-level recomputation. Compared with FlexSP, \sys also avoids
excessive micro-batch partitioning. This is important for Qwen3-MoE
models, where FSDP communication is non-negligible:
splitting a batch into many small micro-batches reduces the amount of
work available to hide FSDP communication and can lower device
utilization.

Finally, NSP-2L confirms that coarse long/short separation is not
sufficient. NSP-2L improves over Static SP in many cases by avoiding
some unnecessary large-SP communication for short samples, but it remains
below full \sys and can be worse than FlexSP when many medium-length
samples are present. By disabling intermediate SP levels, NSP-2L must
place such samples either at the maximum SP level or at the leaf level,
missing the communication--balance trade-off points provided by the full
tree. The gap between NSP-2L and \sys therefore isolates the value of
complete SP-tree nesting.

\subsection{Case Study}

\begin{figure}[!t]
  \centering
  \includegraphics[width=\linewidth]{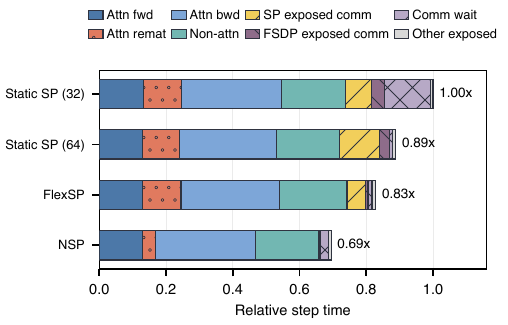}
  \caption{Relative step-time breakdown for a representative iteration.
  SP/FSDP communication reports exposed pure
  communication, and communication waits are grouped separately. Values are
  normalized to Static SP-32.}
  \label{fig:case-study-breakdown}
  \Description{A horizontal stacked bar chart showing relative step-time
  breakdowns for Static SP variants, FlexSP, and NSP.}
\end{figure}

\begin{figure}[!t]
  \centering
  \includegraphics[width=\linewidth]{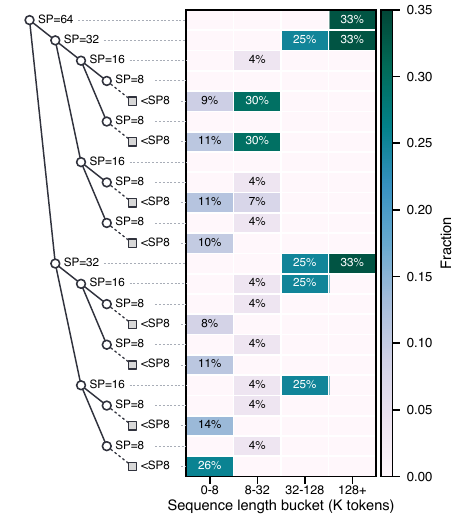}
  \caption{Node-level routing distribution across the SP tree for a
  representative \sys iteration. Each column sums to 100\% within a sequence
  length bucket.}
  \label{fig:case-study-routing-distribution}
  \Description{A heatmap showing the fraction of samples in each sequence
  length bucket routed to concrete SP-tree nodes.}
\end{figure}

\paragraph{Aggregate time breakdown.}
\Cref{fig:case-study-breakdown} breaks down the relative step time for a
representative long-context iteration. We include Static SP-32, Static SP-64,
FlexSP, and \sys to expose the main trade-offs behind the aggregate results.
All configurations process the same samples with the same model, so the main
forward--backward compute is directly comparable. The differences come from
recomputation, exposed communication, and communication waits, revealing how
effectively each system balances work and hides communication.

Static SP-32 leaves the workload imbalanced, which appears as a long
communication-wait component. Static SP-64 improves balance for the longest
samples and reduces communication waits, but it forces short samples to
communicate across the full SP group, increasing exposed SP communication.
FlexSP forms disjoint SP groups of different sizes and uses its MILP solver to
assign samples across them so as to balance per-group execution time, confining
much of the SP communication to smaller groups and reducing exposed SP
communication. It also sorts and splits the batch into three micro-batches,
lowering the length variance within each and thus improving balance. However,
the non-attention compute becomes larger, reflecting the extra overhead
introduced by micro-batch splitting.

NSP's gain comes from two effects. First, compared with the full attention
recomputation that a tight activation budget would otherwise force, tree-level
recomputation spends about one quarter more activation storage to avoid roughly
two thirds of the recomputation overhead. Second, the routing policy, nested structure, and
compute--communication overlap together reduce exposed SP communication. Overall,
the breakdown shows that NSP's end-to-end speedup comes from jointly reducing
exposed SP communication, communication waits, and recomputation overhead.

\paragraph{Routing distribution.}
To further explore \sys's flexible strategy, \Cref{fig:case-study-routing-distribution}
visualizes the actual routing plan for the same iteration. Long samples
are mostly placed on upper tree nodes such as SP=64 and SP=32, while
short samples concentrate on lower-level and sub-SP8 nodes to avoid
unnecessary large-group communication. Medium-length samples are spread
across SP=16 and SP=8 depending on residual subtree capacity. This
distribution shows how \sys uses different SP-tree levels for different
samples, achieving good load balance while keeping communication low.

\subsection{Ablation}
\Cref{fig:ablation} reports a leave-one-out ablation of \sys's main
components on Dataset A using the same internal 64-GPU cluster, with Static SP as the
reference (dashed line). Starting from full \sys, we disable one
component at a time: tree-level recomputation (\emph{w/o Recompute}),
phase-streaming overlap (\emph{w/o Overlap}), the routing heuristic, which
falls back to a greedy assignment (\emph{w/o Router}), and the full cost
model, which is replaced by an attention-forward-only proxy (\emph{w/o
Cost Model}). Every component contributes positively, but their relative
importance shifts with the operating regime, which is in turn determined
by model size and memory pressure.

On the memory-bound Qwen3-MoE-235B, recomputation is the dominant
contributor, with a 9.6\% degradation when removed, since \sys's
tree-level recomputation avoids the full attention recomputation that
the memory limit would otherwise force. The cost model contributes the
least, at 0.7\%, because under memory pressure the token-balance
constraint already balances the non-MLP cost implicitly, so an
attention-forward-only proxy suffices to reach a near-balanced plan.

The compute-bound Qwen3-MoE-30B reverses this ordering. Here the cost
model becomes the dominant contributor: removing it degrades iteration
time by 33.5\% and even falls behind Static SP. With 32 attention heads
and a maximum SP degree of 32, the two top-level SP=32 groups are prone
to severe imbalance, and the full cost model is needed to jointly balance
attention and non-attention (MLP) cost. Disabling recomputation, in
contrast, has no measurable effect because the model fits in memory and
never triggers recomputation. Overall, these results show that \sys's
components are complementary rather than redundant, with recomputation
most critical in the memory-bound regime and the cost model most critical
in the compute-bound, balance-dominated regime.

\begin{figure}[t]
  \centering
  \includegraphics[width=\linewidth]{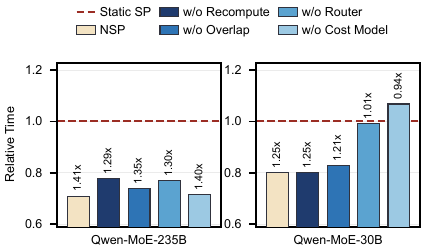}
  \caption{Leave-one-out ablation of \sys's components on Dataset A
  using the same internal 64-GPU cluster. Bars show iteration time relative to Static SP (dashed
  line); lower is better, and each bar is annotated with its speedup over
  Static SP.}
  \label{fig:ablation}
  \Description{Two grouped bar charts, one per Qwen3-MoE model, showing
  iteration time relative to Static SP as each NSP component is removed in
  a leave-one-out fashion.}
\end{figure}

%% file: sections/related_work.tex
\section{Related Work}

\subsection{Basic Parallelisms in LLM Training}
Modern LLM training stacks combine sharded data parallelism
(e.g., DeepSpeed, ZeRO, and FSDP), tensor/model parallelism, and pipeline
parallelism~\cite{rasley2020deepspeed,rajbhandari2020zero,zhao2023fsdp,shoeybi2019megatron,korthikanti2023megatronsp,huang2019gpipe,li2021terapipe,narayanan2021pipedream}.
Large-scale and automatic systems compose or search these axes across device
meshes~\cite{narayanan2021megatronlm,jia2019flexflow,shazeer2018mesh,lepikhin2020gshard,xu2021gspmd,zheng2022alpa},
while MoE training adds expert parallelism as another scaling
dimension~\cite{fedus2022switch,jiang2024mixtral}. Hybrid systems further
optimize overlap, elasticity, full-stack scaling, pipeline bubbles, and
rematerialization/offloading choices~\cite{chen2024centauri,athlur2022varuna,jiang2024megascale,feng2025optimus,yuan2024rematerialization}.
\sys is complementary to these outer parallelism choices: it operates inside
sequence-parallel decoder-layer execution, routing variable-length sequences
across nested SP groups while leaving data, tensor, pipeline, and expert
parallelism unchanged.

\subsection{Long-Context Sequence Parallelism}
Sequence and context parallelism make long-context training feasible by sharding
sequence activations or attention work across GPUs. Megatron-style SP reduces
activation memory, DeepSpeed-Ulysses uses all-to-all sequence partitioning, Ring
Attention uses ring exchanges, USP combines the two communication patterns, and
LoongTrain adds two-dimensional head-context parallelism for long
sequences~\cite{korthikanti2023megatronsp,jacobs2023ulysses,liu2024ringattention,fang2024usp,gu2024loongtrain}.
FlashAttention and FlashAttention-2 reduce attention memory traffic, while
DistFlashAttn brings memory-efficient attention to distributed execution with
token-level load balancing, overlap, and rematerialization-aware
checkpointing~\cite{dao2022flashattention,dao2024flashattention2,li2024distflashattn}.
These systems provide the SP/CP backends that \sys reuses, but they generally
fix the SP/CP degree or communication pattern for a sample group. \sys instead
asks how sequences of different lengths should share one decoder-layer execution,
and organizes standard SP backends into nested groups so short sequences avoid
the synchronization cost of the longest ones.

\subsection{Dynamic and Variable-Length Training}
Dynamic long-context systems address long-tail imbalance by reassigning samples
or micro-batches across ranks. FlexSP uses disjoint SP groups of different
sizes; Hydraulis, ByteScale HDP, HotSPa, and multimodal workload dispatchers
rebalance variable-length work through micro-batch, data/context-parallel, or
hybrid-strategy assignment~\cite{flexsp2025asplos,hydraulis2024,bytescale2024,hotspa2024,orchmllm2025}.
Other systems extend variable-length balancing across the training stack:
Zeppelin combines hierarchical sequence partitioning with NIC-aware routing and
module remapping; WLB-LLM balances 4D parallel training with packing and
per-document context sharding; and DCP partitions data and computation blocks
using input-dependent lengths and attention patterns~\cite{zeppelin2026eurosys,wang2025wlbllm,jiang2025dcp}.

These designs mainly balance at the sample or micro-batch granularity across
ranks. That flexibility can couple the scheduler to the outer training stack:
for example, HDP relies on replicated parameters and is limited to ZeRO-1 rather
than full parameter sharding, with additional constraints for pipeline
parallelism~\cite{bytescale2024}. \sys takes the opposite route: it keeps
consecutive layers globally synchronized and confines heterogeneity inside each
decoder layer by nesting differently sized SP groups. Thus \sys recovers balance
within the layer while remaining compatible with full sharding, pipeline
parallelism, and tensor parallelism.

%% file: sections/conclusion.tex
\section{Conclusion and Future Work}

\sys addresses the communication--balance tradeoff in variable-length
long-context LLM training. Instead of assigning one SP degree to an entire
batch or partitioning GPUs into disjoint dynamic-SP groups, \sys nests SP
groups of different sizes on shared GPUs within a training iteration. This
turns SP-degree selection into routing over an \sptree and gives the executor a
regular hierarchy for inter-level phase streaming and tree-level
recomputation. Across Qwen3-MoE workloads and long-tail datasets on an internal
production GPU cluster, \sys consistently improves end-to-end throughput over Static SP
and FlexSP.

Several directions remain open. First, \sys currently routes each training
step independently; extending the planner across gradient-accumulation windows
or global batches could further smooth rare long-sequence bursts. Second, NSP
could be jointly optimized with other parallelization strategies, such as data,
tensor, pipeline, and expert parallelism, rather than treating the SP hierarchy
in isolation.

%% file: sections/appendix.tex
\section{Pairing Proof for Static SP Imbalance}
\label{app:pairing}

This appendix formalizes the pairing argument used in
\Cref{sec:motivation}: increasing a uniform SP degree reduces the
load-imbalance contribution caused by rare long sequences. The argument is
best understood after collapsing each fixed-size SP group into one scheduling
unit. Suppose a cluster has \worldsize GPUs and each GPU has token capacity
$C$. Under a static SP degree $s$, the cluster contains
$D=\worldsize/s$ identical SP groups, each with aggregate token capacity
$sC$. Let $W_q$ denote the aggregate attention work assigned to group $q$.
Assuming that tokens and attention work are evenly partitioned within each SP
group, a GPU in group $q$ receives work $W_q/s$, while the global average
per-GPU work is $\bar{W}/s$. Thus the per-GPU imbalance ratio is exactly the
same as the imbalance across collapsed SP groups:
\[
  R_s = \frac{\max_q W_q}{\bar{W}},
  \qquad
  \bar{W} = \frac{1}{D}\sum_{q=1}^{D} W_q .
\]
The proof therefore isolates the inter-group load imbalance caused by static
SP assignment; additional intra-group or system-overhead imbalance is outside
this idealized bound.

Before stating the pairing bound, sort the group loads as
$W_{(1)} \ge \cdots \ge W_{(D)}$. For an even number $D=2m$ of groups and
$W_{(1)}>\bar{W}$, define
\[
  \rho
  =
  \frac{W_{(m+1)}-\bar{W}}{W_{(1)}-\bar{W}} .
\]
This parameter lies in the closed interval $\rho\in[-1,1]$. The upper bound
follows from $W_{(m+1)}\le W_{(1)}$. For the lower bound,
\[
  2m\bar{W}
  =
  \sum_{i=1}^{2m} W_{(i)}
  \le
  mW_{(1)} + mW_{(m+1)},
\]
which implies
$W_{(m+1)}-\bar{W}\ge -(W_{(1)}-\bar{W})$.

\begin{lemma}[Pairing bound for SP doubling]
\label{lem:sp-doubling-pairing}
Under the notation above, there exists a feasible assignment at SP degree $2s$
whose imbalance ratio $R_{2s}$ satisfies
\[
  R_{2s}-1
  \le
  \frac{1+\rho}{2}(R_s-1).
\]
\end{lemma}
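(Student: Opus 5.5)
Pair the sorted groups by \emph{folding}: merge the $i$-th heaviest group with the $i$-th lightest, i.e. $W_{(i)}$ with $W_{(2m+1-i)}$ for $i=1,\dots,m$. Each merged pair occupies $2s$ GPUs and has aggregate token capacity $2sC$, which covers the combined load of two groups that were each feasible at capacity $sC$. The result is therefore a feasible assignment at SP degree $2s$ with $m=D/2$ groups. Each sequence keeps its assignment; it is simply sharded over twice as many ranks, so the paper's idealized even-partition assumption still applies. The merged loads are $V_i=W_{(i)}+W_{(2m+1-i)}$. Their mean is $2\bar W$, since total work is unchanged and the number of groups halves. Hence
\[
R_{2s}=\frac{\max_i V_i}{2\bar W},
\qquad
R_{2s}-1=\frac{\max_i V_i-2\bar W}{2\bar W}.
\]

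Next bound $\max_i V_i$. For every $i\le m$ we have $W_{(i)}\le W_{(1)}$ and $2m+1-i\ge m+1$, so $W_{(2m+1-i)}\le W_{(m+1)}$. This gives $V_i\le W_{(1)}+W_{(m+1)}$ for all $i$. Writing $W_{(1)}-\bar W=(R_s-1)\bar W$ and using the definition of $\rho$, $W_{(m+1)}-\bar W=\rho\,(R_s-1)\bar W$. Therefore
\[
\max_i V_i-2\bar W\le (1+\rho)(R_s-1)\bar W,
\]
and dividing by $2\bar W$ gives exactly $R_{2s}-1\le\frac{1+\rho}{2}(R_s-1)$.

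The key choice is folding heaviest-with-lightest. Pairing adjacent groups, for example, would only give $R_{2s}-1\le R_s-1$. The slightly delicate point is that $\max_i V_i$ is not necessarily attained at $i=1$. This is handled by bounding the two summands separately, $W_{(i)}\le W_{(1)}$ and the partner $\le W_{(m+1)}$, rather than trying to identify the maximizing pair.

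The remaining checks are routine. Feasibility holds because merged capacity is additive. The case $W_{(1)}=\bar W$ is excluded by hypothesis; in that case $R_s=1$, all groups are equal, and folding trivially keeps $R_{2s}=1$. Since $\rho\le 1$, the factor $\frac{1+\rho}{2}\le 1$, so the bound never increases imbalance. Since $\rho\ge-1$, the right-hand side is nonnegative, consistent with the inequality $W_{(m+1)}-\bar W\ge -(W_{(1)}-\bar W)$ shown above.
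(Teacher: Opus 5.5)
Your proof is correct and is essentially the paper's argument: merge each group from the heavier half with one from the lighter half, bound every merged load by $W_{(1)}+W_{(m+1)}$, and rewrite the excess via $\rho$. The only difference is the matching. The paper uses the shifted pairing $S_{(k)}\cup S_{(m+k)}$, whose maximum is exactly $W_{(1)}+W_{(m+1)}$ because both summands are non-increasing in $k$, while your fold achieves the same bound, possibly with slack. So folding is not really ``the key choice'': any matching of the top half to the bottom half yields the stated inequality.
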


\begin{proof}
Construct a degree-$2s$ assignment by pairing the heavier half of the
degree-$s$ groups with the lighter half:
\[
  T_k = S_{(k)} \cup S_{(m+k)},
  \qquad k=1,\ldots,m,
\]
where $S_{(k)}$ is the set of sequences assigned to the $k$-th largest group.
Each new group is feasible because its token load is at most $sC+sC=2sC$.
Its work is $W_{(k)}+W_{(m+k)}$, and the average work per new group is
$2\bar{W}$.

Because both terms are sorted in non-increasing order, the largest paired load
in this constructed assignment is $W_{(1)}+W_{(m+1)}$. Hence the constructed
assignment has imbalance
\[
  R'
  =
  \frac{W_{(1)}+W_{(m+1)}}{2\bar{W}} .
\]
Since $R_{2s}$ is the best imbalance attainable at degree $2s$, it is no worse
than this feasible construction: $R_{2s}\le R'$. It remains to rewrite the
excess imbalance of $R'$. From the definition of $\rho$,
\[
  W_{(m+1)}
  =
  \bar{W}+\rho\left(W_{(1)}-\bar{W}\right).
\]
Substituting into $R'$ gives
\[
\begin{aligned}
  R'-1
  &=
  \frac{W_{(1)}+W_{(m+1)}-2\bar{W}}{2\bar{W}} \\
  &=
  \frac{(1+\rho)(W_{(1)}-\bar{W})}{2\bar{W}}
   =
  \frac{1+\rho}{2}(R_s-1).
\end{aligned}
\]
Therefore $R_{2s}-1 \le (1+\rho)(R_s-1)/2$.
\end{proof}

This bound gives a monotonicity result without any distributional assumption:
since $W_{(m+1)}\le W_{(1)}$, we have $\rho\le1$, and therefore
$R_{2s}\le R_s$. If $W_{(m+1)}<W_{(1)}$, then $\rho<1$ and the same pairing
argument proves strict improvement. In long-tail batches, the heavier half of
the SP groups is typically dominated by a small number of long, high-density
sequences, while the lighter half consists mostly of short-sequence groups.
When this separation gives $\rho\le\rho_0<1$, each SP doubling contracts the
excess imbalance by a factor at most $(1+\rho_0)/2$. The special case
$W_{(m+1)}\le\bar{W}$ has $\rho\le0$ and recovers the intuitive ``halving''
bound.

\section{From Feasible Solutions to SP-Tree Solutions}
\label{app:tree-equivalence}

This appendix justifies the tree restriction used in
\Cref{sec:design:tree} at the level of \emph{whole schedules}. In an abstract
homogeneous setting, any feasible malleable schedule whose cooperation groups
are laminar and have power-of-two sizes can be relabeled into one that uses only
\sptree nodes without increasing the makespan. We model the problem on
$\worldsize=2^{K}$ workers labeled $0,\ldots,\worldsize-1$. Each job $j$ runs on
a worker group $G_j$ over some time interval, and a schedule is \emph{feasible}
if at every instant the groups of concurrently running jobs are pairwise
disjoint. A schedule is a \emph{tree schedule} if every group $G_j$ equals the
worker set $R(u_j)$ of some \sptree node $u_j$.

\subsection{Laminar Groups on Homogeneous Workers}
\label{app:tree-laminar}

We assume the running time of a job depends only on the \emph{number} of workers
$|G_j|$, not on their identities (e.g., GPUs fully connected by NVLink within a
node). A family of groups is \emph{laminar} if any two of its members are either
nested or disjoint.

\begin{lemma}[Dyadic packing]
\label{lem:dyadic-packing}
A multiset of block sizes, each a power of two and summing to $2^{k}$, can be
placed as pairwise-disjoint aligned dyadic intervals that tile $[0,2^{k})$.
\end{lemma}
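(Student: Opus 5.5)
We argue by greedy placement in order of nonincreasing block size: sort the blocks as $b_1\ge b_2\ge\cdots\ge b_n$, each $b_i=2^{e_i}$, with $\sum_i b_i=2^k$. Place them left to right, putting block $i$ on the interval $[s_i,s_i+b_i)$ with $s_1=0$ and $s_{i+1}=s_i+b_i$. The intervals are pairwise disjoint and contiguous by construction, and because the sizes sum to $2^k$ the last one ends exactly at $2^k$, so together they tile $[0,2^k)$. All that remains is to check that each interval is an aligned dyadic interval, i.e.\ that $b_i$ divides $s_i$.

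For alignment, note that $s_i=\sum_{t<i}b_t$. Every earlier block satisfies $b_t\ge b_i$, and all sizes are powers of two, so each $b_t$ with $t<i$ is a multiple of $b_i$. Therefore $b_i$ divides $s_i$, and $[s_i,s_i+b_i)$ is aligned. Combined with the tiling property above, this proves the lemma.

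To connect this to \Cref{def:sptree}, an aligned interval $[i2^{m},(i+1)2^{m})$ inside $[0,2^K)$ is exactly $R(u)$ for the node $u=(K-m,i)$. In the later relabeling argument, the lemma will be applied inside a node's range, placing child groups within a parent's block; the same greedy construction works on any aligned subinterval after shifting by its offset.

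There is no real obstacle in the argument. The one point needing care is the sort order: ordering blocks by nonincreasing size is what makes the divisibility step go through, and an arbitrary order can fail (e.g.\ sizes $1,2,1$ would put the size-$2$ block at offset $1$).
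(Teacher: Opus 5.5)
Your proof is correct, but it takes a different route from the paper's.

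The paper argues by induction on $k$. For $k\ge 1$, every non-unit block and the total are even, so the unit blocks can be paired into size-$2$ units. Halving every size gives a multiset summing to $2^{k-1}$, which tiles $[0,2^{k-1})$ by induction. The map $[a,b)\mapsto[2a,2b)$, followed by splitting each paired unit back into two unit blocks, then recovers a tiling of $[0,2^k)$. That is a bottom-up, buddy-style merging argument, and each induction step corresponds to moving up one level of the \sptree.

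Your sorted left-to-right placement is a direct construction. Its only nontrivial step is the divisibility $b_i \mid \sum_{t<i} b_t$, which holds because every earlier block is a larger-or-equal power of two. This buys several things:
\begin{itemize}
\item It avoids the pair/halve/double/unpair bookkeeping.
\item It yields an explicit one-pass placement after sorting.
\item It makes visible a slightly stronger fact: any multiset of powers of two with sum at most $2^k$ packs as disjoint aligned intervals inside $[0,2^k)$, and the exact sum $2^k$ is needed only to make the intervals cover the whole range.
\end{itemize}
Your $1,2,1$ example correctly shows why the sort order matters. The paper's version exposes the level-by-level structure that the relabeling in \Cref{thm:laminar-tree} recurses over. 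Your remark about shifting the construction by an aligned parent offset covers that use equally well, since the child sizes all divide the parent block size.
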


\begin{proof}
Induction on $k$. For $k=0$ the sum is $1$ and the single unit block tiles
$[0,1)$. For $k\ge 1$, every block of size larger than one is even, and the
total $2^{k}$ is even, so the number of unit blocks is even; pair them into
size-$2$ units. All sizes are now even, and halving them yields a multiset of
powers of two summing to $2^{k-1}$, which tiles $[0,2^{k-1})$ by induction.
Mapping each interval $[a,b)\mapsto[2a,2b)$ and splitting each paired unit back
into two unit blocks tiles $[0,2^{k})$.
\end{proof}

\begin{theorem}
\label{thm:laminar-tree}
On $\worldsize=2^{K}$ homogeneous workers, let a feasible schedule satisfy
(i) every group has power-of-two size and (ii) the family of all job groups is
laminar. Then there is a relabeling of workers under which the schedule becomes
a tree schedule with the same makespan. In particular, an optimal such schedule
admits an optimal tree schedule.
\end{theorem}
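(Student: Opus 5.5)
The plan is to build a permutation $\pi$ of the workers that maps every job group $G_j$ onto the rank set $R(u_j)$ of some \sptree node. The schedule itself is left untouched: each job keeps its start time and its group size. Because running time depends only on $|G_j|$, relabeling preserves every job's duration. Because $\pi$ is a bijection, pairwise disjointness of concurrently running groups is preserved, so feasibility is preserved. Both the makespan claim and the optimality corollary therefore reduce to constructing $\pi$.

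To handle the laminar family $\mathcal{F}=\{G_j\}$, first add the full worker set $[0,2^K)$ and merge duplicate groups. Under inclusion, $\mathcal{F}$ then forms a rooted forest, which becomes a tree once the root is added. Every member is a power-of-two set. For each member $F$, the maximal proper submembers (its children) are pairwise disjoint subsets of $F$. Workers of $F$ not covered by any child are treated as singleton blocks of size $1$.

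Consequently, the children's sizes together with these singletons form a multiset of powers of two summing to $|F|$. By \Cref{lem:dyadic-packing}, they can be placed as disjoint aligned dyadic intervals tiling $[0,|F|)$.

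The relabeling is built top-down. The root is mapped to $[0,2^K)=R((0,0))$. Inductively, suppose $F$ has been assigned an aligned dyadic interval $I_F=[c\,|F|,(c+1)|F|)$, which is exactly $R(u)$ for a node $u$ at depth $K-\log_2|F|$. Translate the tiling from \Cref{lem:dyadic-packing} by $c|F|$. Each child $F'$ then receives an interval $c|F|+[a,a+|F'|)$, where $a$ is a multiple of $|F'|$. Since $|F'|$ divides $|F|$, the offset $c|F|+a$ is also a multiple of $|F'|$, so the child's interval is again aligned and equals some $R(u')$ with $u'$ a descendant of $u$. The leftover singleton workers of $F$ are mapped bijectively onto the assigned unit slots.

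At the leaves of the recursion, every worker lies in a unique minimal member of $\mathcal{F}$ (possibly the root), and it receives exactly one label there. This yields a bijection $\pi$ with $\pi(G_j)=R(u_j)$ for all $j$.

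The step requiring the most care is verifying that this recursive assignment is globally consistent. Three properties must hold. First, a worker's image must be determined by its minimal enclosing member alone, so no worker is labeled twice across different branches. Second, images of sibling subtrees must stay inside their disjoint assigned intervals, which gives injectivity. Third, alignment must propagate through the divisibility argument above. I would state all three as an induction invariant: for each member $F$, $\pi$ restricted to $F$ is a bijection onto $I_F$, and every member contained in $F$ is mapped onto an aligned dyadic subinterval of $I_F$. The last two observations are that the schedule's timing is unchanged by $\pi$, and that applying the same relabeling to an optimal laminar power-of-two schedule yields a tree schedule of equal makespan.
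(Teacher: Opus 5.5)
Your proposal is correct and follows essentially the same route as the paper: add the full set to the laminar family, treat uncovered workers as unit blocks, apply the dyadic packing lemma top-down to place each child in an aligned sub-block of its parent's interval, and then note that the resulting bijection preserves job durations (by homogeneity) and disjointness of concurrent groups (hence feasibility and makespan). Your explicit divisibility check for alignment after translation, together with the induction invariant, spells out what the paper compresses into ``recursing yields a global relabeling.''
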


\begin{proof}
Add the full set and all singletons to the group family; it remains laminar, and
its inclusion order is a forest rooted at the full set. Process the sets top
down. For a set $S$ of size $2^{k}$, its maximal proper subsets in the family
are disjoint and have power-of-two sizes; together with the elements of $S$ that
lie in no such subset (treated as unit blocks), their sizes sum to $2^{k}$. By
\Cref{lem:dyadic-packing} they tile the $2^{k}$-block assigned to $S$, placing
each child in an aligned dyadic sub-block. Recursing yields a global relabeling
$\pi$ that maps every group to an \sptree node.

Because workers are homogeneous, $\pi$ preserves every job's running time. As a
bijection, $\pi$ keeps disjoint groups disjoint, so concurrently running jobs
remain conflict-free at every instant and feasibility is preserved. The
relabeled schedule keeps all timings, so its makespan equals the original.
\end{proof}

The laminar condition is tight. The crossing groups $\{0,1\}$ and $\{1,2\}$ both
have power-of-two size, yet they overlap without nesting; no relabeling can make
both aligned dyadic intervals, because a bijection preserves partial overlap
while any two \sptree nodes are nested or disjoint. This crossing configuration
is exactly what the laminar assumption rules out. The conversion relabels workers
and thus relies on homogeneity; across a hierarchical inter-node fabric the
relabeling should be read up to the bandwidth domain within which workers are
interchangeable.